\theoremstyle{thmstyleone}%
\theoremstyle{thmstyletwo}%
\theoremstyle{thmstylethree}%
\newtheorem{lem}{Lemma}
\newtheorem{thm}{Theorem}
\newtheorem{claim}{Claim}
\newtheorem{defi}{Definition}
\newtheorem{Proposition}[lem]{Proposition}
\begin{document}

\title[Article Title]{An extension of Dembo-Hammer's reduction algorithm for the 0-1 knapsack problem}

\author{Yang Yang}\email{zhugemutian@outlook.com}

\affil{\orgdiv{School of Mathematical Sciences}, \orgname{Xiamen University}, \orgaddress{\street{Siming South Road 422\#}, \city{Xiamen}, \postcode{361005}, \state{Fujian}, \country{PR China}}}

\abstract{Dembo-Hammer's Reduction Algorithm (DHR) is one of the classical algorithms for the 0-1 Knapsack Problem (0-1 KP) and its variants, which reduces an instance of the 0-1 KP to a sub-instance of smaller size with reduction time complexity $O(n)$. We present an extension of DHR (abbreviated as EDHR), which reduces an instance of 0-1 KP to at most $n^i$ sub-instances for any positive integer $i$. In practice, $i$ can be set as needed. In particular, if we choose $i=1$ then EDHR is exactly DHR. Finally, computational experiments on randomly generated data instances demonstrate that EDHR substantially reduces the search tree size compared to CPLEX.}

\keywords{0-1 knapsack problem(0-1 KP), exact solution, time complexity, polynomial-time, reduction, CPLEX}



\maketitle

\section{Introduction}

Given an item set $N=\{1,2,\cdots,n\}$, let $p_j$ and $w_j$ denote the profit and weight of the $j$-th item, respectively. In classical 0-1 Knapsack Problem (0-1 KP), the goal is to select a subset of items from the set $N$ such that the sum of their weights does not exceed a given capacity $C$. The objective is to maximize the total profits of the chosen items\cite{Kellerer2004, Pisinger1997, Balas1980, Bellman1957, Martello1999, Martello2000}. In terms of $(0,1)$-vector, the 0-1 KP can be formulated as the following programming:

\begin{defi}\label{0-1KP} (0-1 KP).
    \begin{align}\label{KP_model-1}
        \max f(\bm{X})=\sum\limits_{j=1}^n x_jp_j
    \end{align}
subject to
    \begin{align}
        & g(\bm{X})=\sum\limits_{j=1}^n x_jw_j\le C\label{KP_model-2}\\
        & \hspace{35pt} x_j\in \{0,1\}\label{KP_model-3}
    \end{align}
\end{defi}

For $j \in N$, $x_j=1$ indicates that the $j$-th item is packed in the knapsack while $x_j=0$ indicates not. For simplicity, we assume that $p_j$, $w_j$ and $C$ are positive integers for any $j\in N$ \cite{Kellerer2004}. Meanwhile, in order to avoid trivial solution, we assume $w_j<C$ for any $j\in N$ and $\sum_{j=1}^n w_j>C$. In the algorithms of the 0-1 KP typically employed, one of the key points for solving the 0-1 KP is initially to order the variables according to non-increasing profit-to-weight $e_j=p_j/w_j$ ratios, also called the {\it profit density}. Therefore, we also assume the following
$$e_1\ge e_2\ge\cdots\ge e_n.$$

The 0-1 KP is known as NP-hard problem \cite{Garey1979,Karp1972}. Except the Dynamic Programming \cite{Bellman1957} that can exactly solve the 0-1 KP in pseudo-polynomial time, there is currently no polynomial time complexity algorithm that can exactly solve the 0-1 KP. Therefore, various methods or strategies for fast dimensionality reduction of problems in polynomial time have received much attention, that is, to reduce the size of an instance of the 0-1 KP through a reduction algorithm with polynomial complexity that partitions the item set $N$ into three subsets $N_0,N_1$ and $F$ so that the items in $N_1$ are all included in any optimal solution while every item in $N_0$ is not. Thus the optimal solution of the instance is given by $N_1+N_F$, where $N_F$ is an optimal solution of the sub-instance of the original one restricted on $F$, that is, an instance of size $|F|$.

Along this direction, a number of reduction algorithms were proposed. For examples, we refer to the Ingargiola and Korsh's Reduction algorithm (IKR) \cite{ Ingargiola1973} with time complexity of $O(n^2)$ by Dantzig bound \cite{ Dantzig1957};  Martello and Toth's Reduction algorithm(MTR) with time complexity to $O(n\log n)$ by Reduction with Complete Sorting(RCS) and Reduction with Partial Sorting(RPS) \cite{Martello1988}. Further, based on MTR, in 1990 Martello and Toth  proposed MTR2 \cite{Martello1990} to get a better solution.

In addition, Dembo and Hammer proposed a Reduction algorithm (DHR) \cite{ Dembo1980}, which reduces an instance of the 0-1 KP with $n$ items to be a sub-instance of
$$|F|=n-\left|\left\{j< b: \frac{p_j}{w_j+r}>\frac{p_b}{w_b}\right\}\right|-\left|\left\{j\ge b: \frac{p_j}{w_j-r}<\frac{p_b}{w_b}\right\}\right|$$
items with reduction time complexity  $O(n)$, where $r$ is the residual capacity, i.e.,  $r=C-\sum_{k=1}^{b-1}w_k\notag$, and $b$ is the break item( also called the $split\ item$ in literature \cite{Pisinger2005}), i.e. $b = \min\{k:\sum_{j=1}^kw_j>C\}$.  DHR has received widespread attention because of its simplicity and effectiveness, and its ease of hybridizing with other algorithms. Although DHR alone is not as efficient as IKR, MTR and MTR2\cite{ Martello1988,Fahle2002},  Pisinger  in 1995 presented EXPKNAP \cite{ Pisinger1995a} based on the core strategy \cite{ Balas1980}, which has better performance than MTR and MTR2. Later in 1997, MINKNAP \cite{ Pisinger1997} was proposed based on EXPKNAP and DHR, the performance of which is better than EXPKNAP.

In addition to being used to solve the 0-1 KP, DHR also has more applications for some extended models of the knapsack problem. Tsesmetzis et al. \cite{ Tsesmetzis2008a} transformed QoS-aware problem to Selective Multiple Choice Knapsack Problem and  designed an algorithm with time complexity between $O(n\log n)$ and $O(n^2)$ through DHR as lower bound, which increases the provider's profit up to 0.5\% on average. Egeblad and Pisinger solved the two- and three-dimensional knapsack packing problem with semi-normalized packing algorithm and DHR \cite{ Pisinger2009a}. Using DHR, Pisinger and Saidi also  analysed the tolerance of 0-1 KP \cite{ Pisinger2016a}.

Recently, Dey et al.\cite{ Dey2023} proposed a method to analyse the upper bound of the nodes in search tree of the Branch and Bound algorithm, and prove that the branch and bound algorithm can solve random binary integer programming in polynomial time.

In this paper, we propose an extension of Dembo-Hammer's Reduction Algorithm (EDHR). For any positive integer $i$, the algorithm EDHR reduces an instance of KP with $n$ items to be $n^{i}$ sub-KP sub-instances of
$$|F|=n-\left|\left\{j< b: \frac{p_j}{w_j+r/i}>\frac{p_b}{w_b}\right\}\right|-\left|\left\{j\ge b: \frac{p_j}{w_j-r/i}<\frac{p_b}{w_b}\right\}\right|.$$
items with reduction time complexity $O(n)$.

In practice, $i$ can be set by need. In particular, if we choose $i=1$ then EDHR is exactly DHR.  Finally, we perform the computational experiment for some data instances that are constructed randomly. Our experiment shows that, compared with CPLEX, EDHR significantly decreases the search tree size for the instances. Our method also reduces the interval gap of the distances from power of 2 to integer and decreases the complexity of the method given by Dey et al.

\section{Dembo and Hammer's Reduction Algorithm}

If we relax the integrality constraint $x_j\in \{0,1\}$ to the linear constraint $0\leq x_j\leq 1$, we obtain the Linear Knapsack Problem (LKP) \cite{Pisinger1997}. Let  $\boldsymbol{X}^*=(x^*_1,x^*_2,\ldots,x^*_n)$ be an optimal solution to LKP, where $0\leq x^*_j\leq 1$ for each $j\in\{1,2,\ldots,n\}$. It is clear that $x_j^*=1$ if $j<b$, $x_j^*=0$ if $j>b$ and $x^*_b=(C-\sum_{i=1}^{b-1}w_i)/w_b$. This yields naturally an upper bound, called Dantzig bound, for the 0-1 KP\cite{ Dantzig1957}:
\begin{align}
    U=\sum\limits_{k=1}^{b-1}p_k+\left\lfloor rp_b/w_b\right\rfloor \notag,
\end{align}
where $\lfloor x\rfloor$ is the greatest integer no more than $x$ and $r=C-\sum\limits_{k=1}^{b-1}w_k\notag$, called the  the {\it residual capacity}.

On the other hand, the integer solution $\boldsymbol{X}'=(x^*_1,\ldots,x^*_{b-1},0,\ldots,0)$ is a solution to KP, which is known as the
{\it break solution}. This yields naturally a lower bound of the 0-1 KP\cite{ Pisinger1997, Pisinger1995a}, i.e.,
\begin{align}
    L= \sum\limits_{k=1}^{b-1}p_k\notag.
\end{align}

Let $\boldsymbol{X}=(x_1,x_2,\ldots,x_n)$ be an arbitrary solution of KP. Note that the upper bound $U$ and lower bound $L$ do not mean that  $x_j = 1$ for every $j= 1,2,\ldots, b-1$ and $x_j=0$ for $j=b,\ldots, n$. Moreover, the items where $x_j\not=x'_j$ are generally very close to the break item $b$. Pisinger attempted to test this conclusion by constructing 1000 data instances with $n = 1000$, where $p_j$ and $w_j$ were randomly distributed within the interval $[1, 1000]$. The capacity $C$ was chosen such that the break item $b$ was set to 500 for all instances. Items in each data instance are ordered according to non-increasing profit density. The computational experiment described in \cite{Pisinger1995a} revealed that, on average, there were only about 3.4 such items per instance with $n = 1000$. Theoretically, Dembo and Hammer proved the following result.

Pisinger attempted to test this conclusion by constructing 1000 data instances with $n = 1000$, where $p_j$ and $w_j$ were randomly distributed within the interval $[1, 1000]$. The capacity $C$ was chosen such that the break item $b$ was set to 500 for all instances. Items in each data instance are ordered according to non-increasing profit density. The computational experiment described in \cite{Pisinger1995a} revealed that, on average, there were only about 3.4 such items per instance with $n = 1000$.

\begin{thm}\label{DHR} \cite{ Dembo1980, Pisinger1995a} Let $\bm{Y}=(y_1,y_2,\ldots,y_n)$ be the optimal solution. For any $j=1,\cdots,b-1$, if
    \begin{equation}\label{DHR-1}
        \left|\begin{array}{cc}
        p_j & r+w_j\\
        p_b & w_b
        \end{array}\right|>0,
    \end{equation}
    then $y_j=1$, that is, the item $j$ is included in the optimal solution.

  Further, for any $j=b,\cdots,n$, if
    \begin{equation}\label{DHR-2}
        \left|\begin{array}{cc}
        -p_j & r-w_j\\
        p_b & w_b
        \end{array}\right|>0,
    \end{equation}
  then  $y_j=0$, that is, the item $j$ is not included in the  optimal solution.
\end{thm}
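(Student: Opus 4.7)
The plan is to prove both implications by contradiction, comparing the LP-relaxation value against the break-solution lower bound $L=\sum_{k=1}^{b-1}p_k$, and exploiting the density ordering $p_1/w_1\geq\cdots\geq p_n/w_n$, which makes $p_b/w_b$ the minimum rate at which profit can be gained (or must be lost) when rebalancing the fractional optimum.

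For \eqref{DHR-1}, suppose some optimum $\bm{Y}$ has $y_j=0$ for an index $j<b$ that satisfies the condition. Then $\bm{Y}$ is feasible for LKP augmented by the constraint $x_j=0$; write $U_j^0$ for the optimum of this restricted LP. Setting $x_j=0$ frees $w_j$ units of capacity, which can only be redistributed to items of density at most $p_b/w_b$, hence $U_j^0\leq \sum_{k<b,\,k\neq j}p_k+(r+w_j)p_b/w_b$. Rearranging \eqref{DHR-1} to $p_j>(r+w_j)p_b/w_b$ yields $U_j^0<\sum_{k<b}p_k=L$. Combined with $f(\bm{Y})\geq L$ (break feasibility) and $f(\bm{Y})\leq U_j^0$ (LP relaxation), this is the desired contradiction.

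For \eqref{DHR-2}, assume $j\geq b$ meets the hypothesis yet $y_j=1$ in some optimum $\bm{Y}$; let $U_j^1$ be the LKP optimum with $x_j=1$ forced. Absorbing $w_j$ into the knapsack requires decrementing other fractional variables, and the least-profit-loss rate is $p_b/w_b$ per unit weight (reducing $x_b$ first, then touching items of higher density only costs more). Thus $U_j^1\leq \sum_{k<b}p_k+(r-w_j)p_b/w_b+p_j$. Rewriting \eqref{DHR-2} as $p_j<(w_j-r)p_b/w_b$ gives $U_j^1<L$, contradicting $L\leq f(\bm{Y})\leq U_j^1$.

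The main obstacle is justifying the above LP bounds in the edge cases where $r+w_j>w_b$ (first part) or $w_j>r$ (second part), since then the rebalancing cannot be performed by adjusting $x_b$ alone. The resolution in both situations is the same monotonicity remark: any further adjustment necessarily touches items of density strictly greater than $p_b/w_b$, which can only decrease $U_j^0$ and $U_j^1$ relative to the stated formulas, so the inequalities remain valid. Once this monotonicity is recorded, the remaining arithmetic is routine.
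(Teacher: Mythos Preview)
Your argument is correct and is precisely the classical Dembo--Hammer reduction argument: bound the restricted LP relaxations $U_j^0$ and $U_j^1$ using the fact that every unit of capacity beyond the break item earns at most $p_b/w_b$ (and every unit removed from items before $b$ loses at least $p_b/w_b$), then compare with the break lower bound $L$. Your treatment of the edge cases via the density-monotonicity remark is also the standard way to handle them.

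There is nothing to compare against in the paper itself: Theorem~\ref{DHR} is stated there with citations to \cite{Dembo1980,Pisinger1995a} and no proof is given. The paper simply invokes the result as background for its extension (Theorems~\ref{Ni1} and~\ref{Ni4}). Your write-up is therefore a faithful reconstruction of the original Dembo--Hammer proof rather than an alternative to anything the paper does.
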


Let $N_{1,1}$ denote the set of items in $\{1,\ldots,b-1\}$ that satisfy inequality (\ref{DHR-1}), and $N_{1,4}$ the set of items in $\{b,\ldots,n\}$ that satisfy inequality (\ref{DHR-2}).

According to Theorem 1, every item in $N_{1,1}$ is included in any optimal solution and, in contrast, no item in $N_{1,4}$ is included in an optimal solution. Thus, the original KP could be reduced to be a sub-KP $F$ of $n-|N_{1,1}|-|N_{1,4}|$ items and capacity $C_F=C-\sum_{i\in N_{1,1}}w_i$.

\section{Main result}

In this section, we give an extension of DHR Algorithm. The main idea is to extend the size of $N_{1,1}$ and $N_{1,4}$ determined by the DHR Algorithm.

Let $j,k$ $(1\le j,k\le b-1)$ be two items such that none of them satisfies (\ref{DHR-1}). Let $K^*$ be the instance obtained from the original problem by combining the items $j$ and $k$ to be a new item $t$ with profit $p^*_t=p_j+p_k$ and weight $w^*_t=w_j+w_k$. Moreover, we assume that the items in $K^*$ are ordered according to non-increasing profit density.  Since $p_j/w_j\geq p_b/w_b$ and  $p_k/w_k\geq p_b/w_b$, we have  $p^*_t/w^*_t\geq p_b/w_b$. Moreover, it is clear that the break item of $K^*$ is exactly that of $K$ and, therefore, $p^*_{b-1}=p_b$ and $w^*_{b-1}=w_b$. Let $\bm{X}''$ be an optimal solution  of $K^*$. If
\begin{align}\label{EDHR}
    \frac{p_j+p_k}{w_j+w_k+r}>\frac{p_b}{w_b},
\end{align}
i.e., $p^*_t/(w^*_t+r)>p^*_{b-1}/w^*_{b-1}$, then by inequality (\ref{DHR-1}), the item $t$ must be included in $\bm{X}''$.

To facilitate further discussion, let $\bm{Y}$ represent the optimal solution, where $y_j = 1$ indicates that the $j$-th item is selected by the optimal solution $\bm{Y}$, and $y_j \neq 1$ indicates that it is not selected.

\begin{Proposition} If the inequality (\ref{EDHR}) is satisfied, then any  optimal solution  $\bm{Y}$ of the 0-1 KP contains at least one of the two items $j$ and $k$. Equivalently, at most one of the two items $j$ and $k$ is not included in the optimal solution $\bm{Y}$.
\end{Proposition}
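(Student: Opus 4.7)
I will prove the statement by contradiction, reducing it to an application of Theorem~\ref{DHR} on the auxiliary instance $K^*$ constructed just before the statement. Suppose that some optimal solution $\bm{Y}$ of the original 0-1 KP satisfies $y_j = y_k = 0$; I will derive a contradiction.

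The first step is to relate the optimal values of $K$ and $K^*$. Every feasible solution $\bm{X}$ of $K^*$ lifts to a feasible solution of $K$ of equal profit and weight by setting $y_j = y_k = x_t$ and leaving all other coordinates unchanged, so the optimum of $K^*$ is at most the optimum of $K$. Conversely, the assumed $\bm{Y}$, viewed as a solution $\tilde{\bm{Y}}$ of $K^*$ with $\tilde{x}_t = 0$, is feasible with the same profit, so the two optima coincide and $\tilde{\bm{Y}}$ is an optimal solution of $K^*$ that does not select the merged item $t$.

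The second step is to apply Theorem~\ref{DHR} to $K^*$ at item $t$. The discussion preceding the statement already records $p^*_{b-1} = p_b$ and $w^*_{b-1} = w_b$, so item $b$ of $K$ is the break item of $K^*$. A short bookkeeping argument then shows that the residual capacity of $K^*$ also equals $r$, since merging $j$ and $k$ into $t$ preserves the pre-break total weight $\sum_{i=1}^{b-1} w_i$ of $K$. With these identifications in place, the DHR inequality~(\ref{DHR-1}) applied to $t$ inside $K^*$ becomes precisely the hypothesis~(\ref{EDHR}), and Theorem~\ref{DHR} then forces $x_t = 1$ in every optimal solution of $K^*$. This contradicts $\tilde{x}_t = 0$ and completes the proof.

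The main technical step is the verification that the DHR hypothesis for $t$ inside $K^*$ coincides verbatim with~(\ref{EDHR}); this reduces to the identity $r^* = r$ and the identification of $(p^*_{b-1}, w^*_{b-1})$ with $(p_b, w_b)$. Beyond this routine check, the argument collapses onto a single invocation of Theorem~\ref{DHR}, since the inequality $p^*_t/w^*_t \geq p_b/w_b$ (already noted as a consequence of $j, k < b$) guarantees that $t$ sits strictly before the break item of $K^*$ in the profit density ordering and therefore falls within the scope of the theorem.
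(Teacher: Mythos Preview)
Your proof is correct and follows essentially the same route as the paper: assume $y_j=y_k=0$, observe that $\bm{Y}$ then yields an optimal solution of $K^*$ with $x_t=0$, and derive a contradiction from Theorem~\ref{DHR} applied to $t$ in $K^*$. Your write-up is in fact more careful than the paper's, since you explicitly verify that $K^*$ inherits the same break item and residual capacity $r$ from $K$, details the paper records only in the paragraph preceding the Proposition.
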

\begin{proof}Suppose to the contrary that neither $j$ nor $k$ is included in  $\bm{Y}$. By the definition of $K^*$, we have $f(\bm{Y})\geq f(\bm{X}'')$. This means  $\bm{Y}$ restricted on $N\setminus\{j,k\}$ is a feasible solution of $K^*$. This is a contradiction and the claim follows.
\end{proof}

For the $j$-th item, if $e_j > e_b$, satisfies inequality (\ref{EDHR}), but does not satisfy inequality (\ref{DHR-1}), then the optimal solution $\bm{Y}$ maybe not include the $j$-th item, i.e., $y_j = 0$. If there are two items $j$ and $k$ such that $1 \leq j, k \leq b-1$ and they satisfy inequality (\ref{EDHR}) with $x_j = x_k = 0$, a subproblem is generated. Moreover, we can derive an upper bound for this subproblem using Dantzig's bound, which is clearly less than the objective value of the break solution $\bm{X}'$. Consequently, the optimal solution must select at least one item between $j$ and $k$.

Moreover, if a pair of items $(j, k)$ that satisfy inequality (\ref{EDHR}) are collected as a set, we denote this set by $(j, k) \in N'_{1,1}$. Given that the computation results from CPLEX are used as the baseline in this paper, if a constraint is added to any pair of items that satisfy inequality (\ref{EDHR}), then at least $|N'_{1,1}|$ constraints of the form
\begin{align*}
x_j+x_k\ge 1
\end{align*}
should be added, for all pairs $(j, k) \in N'_{1,1}$. Obviously, this would result in a very large number of constraints, which significantly slow down the computation speed of CPLEX. Therefore, it is crucial for the new algorithm to consider whether inequality (\ref{EDHR}) can be effectively characterized by only a few constraints, or even a single constraint.

Notice that, if $p_j=p_k$ and $w_j=w_k$, then inequality (\ref{EDHR}) can be rewritten as
 \begin{align}\label{r2}
 \frac{p_j}{w_j+r/2}>\frac{p_b}{w_b}.
 \end{align}
Therefore, the above Proposition means that the set of the items $j$ that satisfies $p_j/w_j\geq p_b/w_b$ and inequality (\ref{r2}) contains at most one items that is not in the optimal solution. By applying inequality (\ref{r2}), we can represent the numerous constraints in inequality (\ref{EDHR}) with a single constraint. Generally, this motivates us to consider the set of the items $j$ that satisfy $p_j\geq p_b$ and
    \begin{align}\label{ri}
        \frac{p_j}{w_j+r/i}>\frac{p_b}{w_b}
    \end{align}
for any given integer $i\geq 2$. We will show in the following Theorem \ref{Ni1} that if the inequality (\ref{ri}) is satisfied, then the set has at most $i-1$ items that are not in the optimal solution.
\begin{defi}\label{Ni}
For any integer $i$ where $i\ge 1$, let $N=N_{i,1}\cup N_{i,2}\cup N_{i,3}\cup N_{i,4}\cup N_{i,5}$ be the partition of $N$, where
    \begin{align}
        & N_{i,1}=\{j: e_j>e_b, p_jw_b-p_b(r/i+w_j)>0\}\notag,\\
        & N_{i,2}=\{j:e_j>e_b, p_jw_b-p_b(r/i+w_j)\le 0\}\notag,\\
        & N_{i,3}=\{j:e_j\le e_b, w_j>r/i, p_jw_b+p_b(r/i-w_j)\ge 0\}\notag,\\
        & N_{i,4}=\{j:e_j\le e_b, w_j>r/i, p_jw_b+p_b(r/i-w_j)< 0\}\notag,\\
        & N_{i,5}=\{j:e_j\le e_b, w_j\le r/i\}\notag.
    \end{align}
And let
\begin{align}
        & F_{i,1}=\{j:j\in N_{i,2}\cup N_{i,3}, y_j=1, x^*_j=0\}\notag,\\
        & F_{i,2}=\{j:j\in N_{i,2}\cup N_{i,3}, y_j=x^*_j\}\notag,\\
        & F_{i,3}=\{j:j\in N_{i,2}\cup N_{i,3}, y_j=0, x^*_j=1\}\notag,\\
        & D_{i,1}=\{j:j\in N_{i,1}, y_j=0\}\notag,\\
        & D_{i,2}=\{j:j\in N_{i,4} , y_j=1\}\notag,\\
        & D_{i,3}=\{j:j\in N_{i,5}, y_j=1\}\notag.
    \end{align}
\end{defi}

Definition \ref{Ni} initially partitions the items with a profit density exceeding that of the break item into two sets: $N_{i,1}$ consists of items that satisfy inequality (\ref{ri}), while $N_{i,2}$ contains those that do not. Similarly, items with a value density less than or equal to the break item are categorized based on whether they satisfy the following inequality:
\begin{align}\label{ri2}
\frac{p_j}{w_j - r/i} > \frac{p_b}{w_b}.
\end{align}

Items satisfying this inequality are placed in set $N_{i,3}$, and those that do not in set $N_{i,4}$. Moreover, since the left side of inequality (\ref{ri2}) is always negative when $w_j < r/i$, it implies that these items are never selected by the optimal solution, which contradicts the actual scenario. Consequently, we require an additional set to describe these items, denoted by $N_{i,5}$.

Items in $N_{i,1}$ are typically selected by the break solution $\bm{X}'$ due to their high profit density. Therefore, only the number of items not selected in $N_{i,1}$ needs to be counted and denoted as $D_{i,1}$. Conversely, items in $N_{i,4}$ and $N_{i,5}$, because of their low profit density, are usually not selected. Thus, only the number of items selected in these sets needs to be statistically recorded, denoted as $D_{i,2}$ for $N_{i,4}$ and $D_{i,3}$ for $N_{i,5}$.

Items in $N_{i,2}$ and $N_{i,3}$ are likely to be selected in the break solution $\bm{X}'$, where $x^*_j = 1$ for $1 \leq j < b$, but not selected by the optimal solution $\bm{Y}$, or not selected by the break solution $\bm{X}'$ but chosen by the optimal solution $\bm{Y}$. In the optimal solution $\bm{Y}$, we will consider items from $N_{i,2}$ and $N_{i,3}$ that are not selected as $F_{i,1}$. Items from $N_{i,2}$ and $N_{i,3}$ that are selected in both the optimal solution $\bm{Y}$ and the break solution $\bm{X}'$ will be denoted as $F_{i,2}$. Items that are not selected in the optimal solution $\bm{Y}$ but are selected in the break solution $\bm{X}'$ will be denoted as $F_{i,3}$. According to Definition \ref{Ni}, we can derive the following result.

\begin{claim}\label{Di1}
If $|D_{i,1}|> i-1$, then
    \begin{align}
        \frac{\sum\limits_{j\in D_{i,2}}p_j+\sum\limits_{j\in D_{i,3}}p_j}{\sum\limits_{j\in D_{i,2}}w_j +\sum\limits_{j\in D_{i,3}}w_j}\le\frac{\sum\limits_{j\in D_{i,1}}p_j}{r+\sum\limits_{j\in D_{i,1}}w_j}.\notag
    \end{align}
\end{claim}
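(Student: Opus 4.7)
The plan is to show that both sides of the desired inequality straddle the break-item ratio $p_b/w_b$: under the hypothesis $|D_{i,1}| > i-1$, the right-hand side will strictly exceed $p_b/w_b$, while the definitions of $N_{i,4}$ and $N_{i,5}$ alone force the left-hand side to be at most $p_b/w_b$. Notably, the argument will be purely arithmetic---it never needs the optimality of $\bm{Y}$, only the combinatorial characterisations of the pieces of the partition in Definition~\ref{Ni}.

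For the right-hand side, I would rewrite the defining inequality of $N_{i,1}$, namely $p_j w_b - p_b(r/i + w_j) > 0$, as $p_j > (p_b/w_b) w_j + p_b r/(i w_b)$, then sum this strict inequality over $j \in D_{i,1}$. Using the hypothesis $|D_{i,1}| \ge i$ to absorb the factor $|D_{i,1}|/i$, the accumulated slack term $|D_{i,1}|\,p_b r/(i w_b)$ dominates $p_b r/w_b$, so the sum collapses to $\sum_{D_{i,1}} p_j > (p_b/w_b)\bigl(r + \sum_{D_{i,1}} w_j\bigr)$, which is exactly the lower bound claimed on the right-hand side.

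For the left-hand side, each $j \in D_{i,2} \subseteq N_{i,4}$ satisfies $p_j w_b + p_b(r/i - w_j) < 0$, equivalently $p_j < (p_b/w_b)(w_j - r/i) \le (p_b/w_b)\, w_j$; each $j \in D_{i,3} \subseteq N_{i,5}$ satisfies $p_j \le (p_b/w_b)\, w_j$ directly from $e_j \le e_b$. Summing across the two sets gives $p(D_{i,2}) + p(D_{i,3}) \le (p_b/w_b)\bigl(w(D_{i,2}) + w(D_{i,3})\bigr)$, so the left-hand ratio is at most $p_b/w_b$ provided the combined weight is positive. Chaining the two estimates then gives the inequality of the claim.

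I do not anticipate a genuine obstacle---every step follows instantly from a defining inequality---but one point to check is the degenerate case $D_{i,2} \cup D_{i,3} = \emptyset$, in which the left-hand side becomes $0/0$ and the statement must be read as vacuous. The organising idea that makes everything line up is the observation that the ``$r/i$ slack'' each item in $N_{i,1}$ contributes, once accumulated over $i$ or more items, reconstitutes exactly the full residual capacity $r$ appearing in the denominator on the right-hand side; this is precisely where the threshold $|D_{i,1}| > i-1$ enters the argument.
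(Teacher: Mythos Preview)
Your proposal is correct and follows essentially the same route as the paper: both arguments sandwich $e_b=p_b/w_b$ between the two sides, using the membership conditions for $N_{i,4}$ and $N_{i,5}$ to bound the left-hand ratio above by $e_b$, and the defining inequality of $N_{i,1}$ together with $|D_{i,1}|\ge i$ to bound the right-hand ratio below by $e_b$. Your direct summation of the $N_{i,1}$ inequalities is in fact slightly cleaner than the paper's version, which first isolates a single item $q\in D_{i,1}$ and then invokes a minimum-$\le$-average step.
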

\begin{proof} Let $q\in D_{i,1}$ be such that $e_k\ge e_q$ for any item $k\in D_{i,1}$. Then we have
    \begin{align}
        \frac{\sum\limits_{j\in D_{i,2}}p_j+\sum\limits_{j\in D_{i,3}}p_j}{\sum\limits_{j\in D_{i,2}}w_j +\sum\limits_{j\in D_{i,3}}w_j}\le e_b\le \frac{p_q}{r/i+w_q} \le \frac{\sum\limits_{j\in D_{i,1}}p_j}{|D_{i,1}|\times r/i+\sum\limits_{j\in D_{i,1}}w_j} \le \frac{\sum\limits_{j\in D_{i,1}}p_j}{r+\sum\limits_{j\in D_{i,1}}w_j} .\notag
    \end{align}
\end{proof}

\begin{thm}\label{Ni1} For any positive integer $i$,  $N_{i,1}$ has at most $i-1$ items that are not in the optimal solution $\bm{Y}$.
\end{thm}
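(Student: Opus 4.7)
The plan is a proof by contradiction. I assume $|D_{i,1}|\ge i$ and aim to derive an impossibility from the feasibility and optimality of $\bm{Y}$ together with the quantitative density bound that this assumption forces on $D_{i,1}$. The key input, essentially a strengthening of Claim~\ref{Di1}, is that each $j\in D_{i,1}\subseteq N_{i,1}$ satisfies $p_j-e_bw_j>e_br/i$; summing over $D_{i,1}$ and using $|D_{i,1}|\ge i$ yields
\[
\sum_{j\in D_{i,1}}p_j \;>\; e_b\Bigl(r+\sum_{j\in D_{i,1}}w_j\Bigr).
\]

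The next step is to express feasibility and optimality of $\bm{Y}$ in the $(D,F)$-decomposition. Because the items are sorted non-increasingly by density, every item on which $\bm{Y}$ disagrees with the break solution $\bm{X}'$ falls into one of the labelled sets: every removal ($x^*_j=1,\ y_j=0$) lies in $D_{i,1}\cup F_{i,3}$, and every addition ($x^*_j=0,\ y_j=1$) lies in $F_{i,1}\cup D_{i,2}\cup D_{i,3}$. Thus $g(\bm{Y})\le C$ translates to the weight inequality $(W^++W_{F_1})-(W^-+W_{F_3})\le r$, and $f(\bm{Y})\ge f(\bm{X}')$ to the profit inequality $P^++P_{F_1}\ge P^-+P_{F_3}$, with the natural notation for the partial sums of weights and profits over each of the sets $D_{i,1},D_{i,2},D_{i,3},F_{i,1},F_{i,3}$.

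Substituting the weight inequality into the amplified bound on $\sum_{D_{i,1}}p_j$ and then eliminating $\sum_{D_{i,1}}p_j$ via the profit inequality produces, after rearrangement,
\[
\sum_{j\in D_{i,2}\cup D_{i,3}}(p_j-e_bw_j) \;+\; \sum_{j\in F_{i,1}}(p_j-e_bw_j) \;>\; \sum_{j\in F_{i,3}}(p_j-e_bw_j).
\]
A sign test then finishes the argument: items in $D_{i,2}\subseteq N_{i,4}$, $D_{i,3}\subseteq N_{i,5}$, and $F_{i,1}\subseteq N_{i,3}$ all have $e_j\le e_b$, so the left-hand side is nonpositive, whereas items in $F_{i,3}\subseteq N_{i,2}$ have $e_j>e_b$, so the right-hand side is nonnegative. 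The derived strict inequality is therefore impossible.

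The main obstacle is bookkeeping rather than insight. I would need to verify that the two claimed descriptions of the removal set and the addition set really are exhaustive, which in turn requires establishing $N_{i,1}\cup N_{i,2}\subseteq\{1,\ldots,b-1\}$ and $N_{i,3}\cup N_{i,4}\cup N_{i,5}\subseteq\{b,\ldots,n\}$ under a tie-breaking convention that places items with $e_j=e_b$ on the appropriate side of the break item; this is what makes $F_{i,1},F_{i,3}$ pick up exactly the items not already captured by $D_{i,1},D_{i,2},D_{i,3}$. Once that point is pinned down, combining the three inequalities is purely mechanical.
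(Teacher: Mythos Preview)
Your argument is correct and follows the paper's overall contradiction strategy: assume $|D_{i,1}|\ge i$, combine the density bound on $D_{i,1}$ with the profit inequality from $f(\bm{Y})\ge f(\bm{X}')$ and the weight inequality from $g(\bm{Y})\le C$, and derive an impossibility. The bookkeeping caveat you flag (that the removal/addition decomposition is exhaustive under a tie-breaking convention aligning $e_j>e_b$ with $j<b$) is exactly the implicit assumption the paper also relies on, so you are at the same level of rigor.

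Where you differ is in execution. The paper works with ratios: Claim~\ref{Di1} gives $\frac{P^+}{W^+}\le\frac{P^-}{r+W^-}$, which is then threaded through a chain of fraction manipulations to obtain inequality~(\ref{10}), and a \emph{separate} argument comparing the densities of $F_{i,3}$ against $F_{i,1}$ produces the opposite inequality~(\ref{13}). You instead linearize from the start via the excess profit $p_j-e_bw_j$: summing $p_j-e_bw_j>e_br/i$ over $D_{i,1}$ gives $P^->e_b(r+W^-)$ directly, and after substituting the feasibility and optimality inequalities you land on a single inequality whose two sides have opposite signs because every term is $p_j-e_bw_j$ for an item on the appropriate side of $e_b$. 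This buys you a shorter, division-free proof that sidesteps the paper's auxiliary comparison of $F_{i,1}$ and $F_{i,3}$ and avoids worrying about empty denominators; the paper's ratio formulation, on the other hand, makes the role of Claim~\ref{Di1} as a standalone lemma more visible.
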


\begin{proof} Since $\boldsymbol{Y}$ is an  optimal solution, we have $f(\boldsymbol{Y})\ge f(\boldsymbol{X}')$. This means that
\begin{equation}\label{psum}
\sum\limits_{j\in D_{i,2}}p_j+\sum\limits_{j\in D_{i,3}}p_j\ge \sum\limits_{j\in D_{i,1}}p_j +\sum\limits_{j\in F_{i,3}}p_j-\sum\limits_{j\in F_{i,1}} p_j.
\end{equation}

On the other hand, we notice that $g(\boldsymbol{X}')+r = C$. Therefore, $g(\boldsymbol{Y})\le C= g(\boldsymbol{X}')+r$. Hence,
\begin{equation}\label{wsum} \sum\limits_{j\in D_{i,2}}w_j+\sum\limits_{j\in D_{i,3}}w_j\le \sum\limits_{j\in D_{i,1}}w_j +\sum\limits_{j\in F_{i,3}}w_j-\sum\limits_{j\in F_{i,1}} w_j+r.
\end{equation}

Suppose to the contrary that $|D_{i,1}|> i-1$. Then by (\ref{psum}), (\ref{wsum}) and Claim \ref{Di1}, we have
        \begin{align}
            &\sum\limits_{i\in D_{i,2}}p_j +\sum\limits_{j\in D_{i,3}}p_j=\left(\sum\limits_{j\in D_{i,2}}w_j +\sum\limits_{j\in D_{i,3}}w_j\right)\frac{\sum\limits_{j\in D_{i,2}}p_j +\sum\limits_{j\in D_{i,3}}p_j} {\sum\limits_{j\in D_{i,2}}w_j+\sum\limits_{j\in D_{i,3}}w_j}\notag\\
            \le & \left(\sum\limits_{j\in D_{i,1}}w_j+r\right) \frac{\sum\limits_{j\in D_{i,2}}p_j+\sum\limits_{j\in D_{i,3}}p_j}{\sum\limits_{j\in D_{i,2}}w_j +\sum\limits_{j\in D_{i,3}}w_j} +\left(\sum\limits_{j\in F_{i,3}}w_j -\sum\limits_{j\in F_{i,1}}w_j\right) \frac{\sum\limits_{j\in D_{i,2}}p_j +\sum\limits_{j\in D_{i,3}}p_j} {\sum\limits_{j\in D_{i,2}}w_j +\sum\limits_{j\in D_{i,3}}w_j}\notag\\
            <&\left(\sum\limits_{j\in D_{i,1}}w_j+r\right) \frac{\sum\limits_{j\in D_{i,1}}p_j} {\sum\limits_{j\in D_{i,1}}w_j+r} +\left(\sum\limits_{j\in F_{i,3}}w_j -\sum\limits_{j\in F_{i,1}}w_j\right)\frac{\sum\limits_{j\in D_{i,2}}p_j +\sum\limits_{j\in D_{i,3}}p_j} {\sum\limits_{j\in D_{i,2}}w_j +\sum\limits_{j\in D_{i,3}}w_j}\notag\\
            =&\sum\limits_{j\in D_{i,1}}p_j+\left(\sum\limits_{j\in F_{i,3}}w_j -\sum\limits_{j\in F_{i,1}}w_j\right) \frac{\sum\limits_{j\in D_{i,2}}p_j +\sum\limits_{j\in D_{i,3}}p_j} {\sum\limits_{j\in D_{i,2}}w_j +\sum\limits_{j\in D_{i,3}}w_j}
        \end{align}

    and
        \begin{align}\label{10}
            \sum\limits_{j\in F_{i,3}}p_j -\sum\limits_{j\in F_{i,1}} p_j &< \left(\sum\limits_{j\in F_{i,3}}w_j -\sum\limits_{j\in F_{i,1}}w_j\right) \frac{\sum\limits_{j\in D_{i,2}}p_j +\sum\limits_{j\in D_{i,3}}p_j} {\sum\limits_{j\in D_{i,2}}w_j +\sum\limits_{j\in D_{i,3}}w_j}.
        \end{align}

    We notice that the profit densities of the items in the set $F_{i,3}$ are more than that of the items in the set $F_{i,1}$. So by Definition \ref{Ni}, we have
$$ \sum\limits_{j\in F_{i,3}}p_j\sum\limits_{j\in F_{i,1}}w_j \ge\sum\limits_{j\in F_{i,3}}w_j\sum\limits_{j\in F_{i,1}}p_j,$$
where $\sum\limits_{j\in F_{i,3}}p_j$ is treated as zero if $F_{i,3}=\emptyset$. Hence,
$$ \sum\limits_{j\in F_{i,3}}p_j\sum\limits_{j\in F_{i,1}}w_j-\sum\limits_{j\in F_{i,1}}p_j\sum\limits_{j\in F_{i,1}}w_j\ge\sum\limits_{j\in F_{i,3}}w_j\sum\limits_{j\in F_{i,1}}p_j-\sum\limits_{j\in F_{i,1}}p_j \sum\limits_{j\in F_{i,1}}w_j,$$
i.e.,
\begin{equation}\label{11}
    \sum\limits_{j\in F_{i,3}}p_j-\sum\limits_{j\in F_{i,1}}p_j \ge \left(\sum\limits_{j\in F_{i,3}}w_j-\sum\limits_{j\in F_{i,1}}w_j\right)
    \frac{\sum\limits_{j\in F_{i,1}}p_j}{\sum\limits_{j\in F_{i,1}}w_j}.
\end{equation}
On the other hand, again by Definition \ref{Ni}, we have
\begin{equation}\label{12}
        \frac{\sum\limits_{j\in F_{i,1}}p_j}{\sum\limits_{j\in F_{i,1}}w_j}>\frac{\sum\limits_{j\in D_{i,2}}p_j +\sum\limits_{j\in D_{i,3}}p_j} {\sum\limits_{j\in D_{i,2}}w_j +\sum\limits_{j\in D_{i,3}}w_j}.
\end{equation}
Combining with inequalities (\ref{11}) and (\ref{12}), \begin{equation}\label{13}
    \sum\limits_{j\in F_{i,3}}p_j -\sum\limits_{j\in F_{i,1}} p_j > \left(\sum\limits_{j\in F_{i,3}}w_j -\sum\limits_{j\in F_{i,1}}w_j\right) \frac{\sum\limits_{j\in D_{i,2}}p_j +\sum\limits_{j\in D_{i,3}}p_j} {\sum\limits_{j\in D_{i,2}}w_j +\sum\limits_{j\in D_{i,3}}w_j}.
\end{equation}
  This is a contradiction, which completes the proof of Theorem \ref{Ni1}.
\end{proof}

By symmetry, the following results follows directly by a similar argument.

\begin{thm}\label{Ni4} For any positive integer $i$, $N_{i,4}$ has at most $i-1$ items that are in the optimal solution $\bm{Y}$.
\end{thm}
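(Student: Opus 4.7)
The plan is to mirror the proof of Theorem~\ref{Ni1}, interchanging the roles of ``heavy-density items excluded from the optimum'' and ``light-density items included in the optimum''. Where Theorem~\ref{Ni1} squeezes $\sum_{j\in D_{i,2}\cup D_{i,3}}p_j$ between the lower bound (\ref{psum}) and the upper bound delivered by Claim~\ref{Di1} together with (\ref{wsum}), I will squeeze $\sum_{j\in D_{i,2}}p_j$ between the same (\ref{psum}) and an analogous upper bound derived from the $N_{i,4}$-version of Claim~\ref{Di1} together with (\ref{wsum}).

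The first step is to establish the symmetric claim: if $|D_{i,2}|>i-1$, then
\[
\frac{\sum_{j\in D_{i,2}}p_j}{\sum_{j\in D_{i,2}}w_j-r}\ <\ \frac{p_b}{w_b}.
\]
Each $j\in D_{i,2}\subset N_{i,4}$ satisfies $p_j/(w_j-r/i)<p_b/w_b$ with $w_j-r/i>0$, so the weighted-average (mediant) inequality yields $\sum_{j\in D_{i,2}}p_j\big/\bigl(\sum_{j\in D_{i,2}}w_j-|D_{i,2}|r/i\bigr)<p_b/w_b$; invoking $|D_{i,2}|\cdot r/i\ge r$ to enlarge the denominator then delivers the displayed bound.

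Next, I would rewrite (\ref{psum}) and (\ref{wsum}) so as to isolate $\sum_{j\in D_{i,2}}p_j$ and $\sum_{j\in D_{i,2}}w_j-r$, multiply the resulting weight inequality by $p_b/w_b$, and apply the symmetric claim to produce an upper bound on $\sum_{j\in D_{i,2}}p_j$ compatible with (\ref{psum}). Using $\sum_{j\in D_{i,1}}p_j\ge(p_b/w_b)\sum_{j\in D_{i,1}}w_j$ (since $D_{i,1}\subset N_{i,1}\subset\{e_j>e_b\}$) and $\sum_{j\in D_{i,3}}p_j\le(p_b/w_b)\sum_{j\in D_{i,3}}w_j$ (since $D_{i,3}\subset N_{i,5}\subset\{e_j\le e_b\}$) to cancel the contributions of $D_{i,1}$ and $D_{i,3}$ leaves the strict inequality
\[
\sum_{j\in F_{i,3}}p_j-\sum_{j\in F_{i,1}}p_j\ <\ \frac{p_b}{w_b}\Bigl(\sum_{j\in F_{i,3}}w_j-\sum_{j\in F_{i,1}}w_j\Bigr),
\]
which contradicts the density ordering between $F_{i,3}\subset N_{i,2}$ and $F_{i,1}\subset N_{i,3}$ already underlying (\ref{11}).

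The main obstacle I expect is the sign bookkeeping: the ``extremal'' member $q\in D_{i,2}$ must now be chosen of \emph{highest} density within $D_{i,2}$ rather than lowest, the defining inequality of $N_{i,4}$ runs in the opposite direction from that of $N_{i,1}$, and the replacement $|D_{i,2}|\cdot r/i\ge r$ enters as a subtraction in the denominator rather than as an addition. Once these directions are pinned down, all the remaining algebra transcribes from the proof of Theorem~\ref{Ni1}.
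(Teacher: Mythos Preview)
Your proposal is correct and matches the paper's approach: the paper itself says only that Theorem~\ref{Ni4} ``follows directly by a similar argument'' to Theorem~\ref{Ni1}, and you have supplied precisely that symmetric argument (with a mild streamlining, comparing everything to $e_b=p_b/w_b$ rather than to the aggregate ratio used in the paper's proof of Theorem~\ref{Ni1}). One small imprecision to fix when you write it up: the containments $F_{i,3}\subset N_{i,2}$ and $F_{i,1}\subset N_{i,3}$ need not hold literally (ties $e_j=e_b$ can mix the two), but what your contradiction actually requires---$e_j\ge e_b$ for $j\in F_{i,3}$ and $e_j\le e_b$ for $j\in F_{i,1}$---follows directly from $x_j^*=1\Rightarrow j<b$ and $x_j^*=0\Rightarrow j>b$.
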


Let  $n_{i,1}=|N_{i,1}|,n_{i,4}=|N_{i,4}|$ and $N_{i,1}^* \subset N_{i,1}, N_{i,4}^* \subset N_{i,4}$. Then by Theorem \ref{Ni1}, we have $|N^*_{i,1}|\geq n_{i,1}-i+1$ and $|N^*_{i,4}|\leq i-1$. Notice that  $N_{i,1}$ has
$$\sum_{j=1}^{i}\binom{n_{i,1}}{n_{i,1}-j+1}\leq n_{i,1}^i$$
subsets of order at least $n_{i,1}-i+1$, and $N_{i,4}$ has
$$\sum_{j=1}^{i}\binom{n_{i,4}}{j-1}\leq n_{i,4}^i$$
subsets of order at most $i-1$.  Let $\bm{Y}_i^*$ be the optimal solution of the sub-instance of the original KP restricted on the subset $N_{i,2}\cup N_{i,3}\cup N_{i,5}$ and let
$${\cal Y}_i=\{\bm{Y}_i^*\cup N^*_{i,1}\cup N^*_{i,4}: N^*_{i,1}\subset N_{i,1}, |N^*_{i,1}|\ge n_{i,1}-i+1, N_{i,4}^*\subset N_{i,4}, N_{i,4}^*\le i-1\}.$$
Then by Theorem \ref{Ni1} and  Theorem \ref{Ni4}, it is clear that the optimal solution  $\bm{Y}$ of original problem is in ${\cal Y}_i$, i.e.,  $\bm{Y}\in {\cal Y}_i$. Further, we note that $|N_{i,2}\cup N_{i,3}\cup N_{i,5}|=n-n_{i,1}-n_{i,2}$ and $|{\cal Y}_i|\leq n_{i,1}^i n_{i,2}^i\leq n^{2i}$. This means that the original KP is reduced into at most $n^{2i}$ sub-instances of  $n-n_{i,1}-n_{i,2}$ items, the maximum optimal solution of which is precisely the optimal solution of the original KP. Based on Lemma 1, although the 0-1 KP is NP-hard, the decision variables of two subsets $N_{i,1}$ and $N_{i,4}$ can be exactly solved in time complexity $O(n^{2i})$.

In particular, the knapsack problem when all items have the same profit density is called the Sub-set problem(SSP) \cite{Karp1972}. We denote the problem that the number of items whose profit density are equal to the break item is finite as KP/SSP.

If the problem is KP/SSP and there is an integer $i$ such that all items whose profit density is more than the break item $b$ belong to the set $N_{i,1}$ and all items whose profit density is less than the break item $b$ belong to the set $N_{i,4}$, then the problem can be solved in time complexity $O(n^{2i})$ by EDHR.

Naturally, whether the constant $i$ has an upper bound becomes the key to solve the decision variables whose profit density are not equal to the profit density of the break item $b$ in polynomial time. In other words, if the constant $i$ has an upper bound, KP/SSP is $\mathcal{P}$.

\begin{thm}
    Constant $i$ has no upper bound.
\end{thm}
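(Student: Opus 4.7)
The plan is to prove the theorem by contradiction: assume a universal bound $M$ exists such that every KP/SSP instance can be fully handled by EDHR with parameter $i \leq M$, and then exhibit a KP/SSP instance for which every such $i$ fails to place some non-break-density item into $N_{i,1}$ or $N_{i,4}$.

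The key quantitative observation driving the construction is that the membership condition for $N_{i,1}$, namely $p_j w_b - p_b(w_j + r/i) > 0$, rearranges to
\begin{align*}
i > \frac{p_b\, r}{p_j w_b - p_b w_j}.
\end{align*}
Because $e_j > e_b$ and the data are positive integers, the denominator $p_j w_b - p_b w_j$ is a positive integer and can be pinned as low as $1$, whereas the numerator $p_b r$ is unconstrained by the denominator. This asymmetry is the lever that makes the required $i$ unbounded.

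Concretely, for any fixed $M \geq 1$ I would take the two-item instance $(p_1, w_1) = (5, 2)$ and $(p_2, w_2) = (2K, K)$ with capacity $C = K + 1$ and $K := M + 2$. The verifications to run are routine: (i) positive integrality, $w_j < C$ for each $j$, and $\sum_j w_j > C$; (ii) the densities satisfy $e_1 = 5/2 > 2 = e_2$, so the items are already sorted; (iii) $w_1 \leq C < w_1 + w_2$, forcing $b = 2$ and residual $r = K - 1$; (iv) the only item with profit density equal to $e_b$ is the break item itself, so the instance is KP/SSP. A direct computation then gives
\begin{align*}
p_1 w_b - p_b(w_1 + r/i) = K - \frac{2K(K-1)}{i},
\end{align*}
so item $1$ lies in $N_{i,1}$ if and only if $i > 2(K - 1) = 2M + 2 > M$. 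Under the assumed bound $i \leq M$, item $1$ would therefore be forced into $N_{i,2}$, contradicting the supposed universal bound and completing the proof.

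The main obstacle is essentially clerical: verifying every standing assumption of the paper (positive integer data, $w_j < C$, ordering by density, correct break item, KP/SSP membership) for the constructed family, and confirming that a single item failing to lie in $N_{i,1}$ really defeats the bounded-$i$ reduction. The latter is clear from the discussion preceding the theorem, which requires \emph{all} items with $e_j \neq e_b$ to be placed into $N_{i,1} \cup N_{i,4}$ in order to obtain the $O(n^{2i})$ guarantee. Beyond these checks, the argument is elementary arithmetic.
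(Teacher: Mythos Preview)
Your proposal is correct and follows the same approach as the paper: assume a universal bound, construct an explicit instance whose sole item with $e_j>e_b$ requires $i$ strictly larger than that bound to enter $N_{i,1}$, and conclude by contradiction. The paper's version is sketchier (it fixes $e_b=2m/(2m+r)$ and a single item with $p_q=w_q=1$, deriving $i>2m$ without fully specifying the remaining data), whereas you carry out all the standing-assumption checks in detail; but the underlying idea is identical.
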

\begin{proof}
    Suppose to the contrary that constant $i$ has an upper bound, with no loss of generality, we let $m$ denote the upper bound of the constant $i$ and have
    \[
    \left|\begin{array}{ccc}
        p_j &    r/m+w_j\\
        p_b &    w_b
    \end{array}\right|>0
    \]
    for each item $j\in \{j|e_j>e_b,j\in N\}$.

    For the bound $m$, we let $e_b = \frac{p_b}{w_b}= \frac{2m}{2m+r}$, and the item profit and weight of an item $q\in N$ both be equal to 1. For any integer $i\le m$, if
    \[
    \left|\begin{array}{ccc}
        p_q &    r/i+w_q\\
        p_b &    w_b
    \end{array}\right|>0,
    \]
    then we have
    $$\frac{1} {r/i+1} =  \frac{p_q} {r/i+w_q} >\frac{p_b}{w_b}=\frac{2m}{2m+r},\notag$$
    and
    $$i>2m>m.\notag$$
    That is to say, for any value of $m$, we can always construct an instance, such that the value of constant $i$ is more than $m$. The proof of Theorem 4 is completed.
\end{proof}
Since constant $i$ has no upper bound, any item whose profit density is not equal to the profit density $e_b$ cannot be completely classified into $N_{i,1}$ and $N_{i,4}$, so the subproblem consist by $N_{i,2}$, $N_{i,3}$ and $N_{i,5}$ is still a NP-hard problem.

\section{Experimental results and comparative Analysis}

We perform all experimental computation on the device with Windows 11 Edition platform, Inter$\circledR$ Core$^{\text{TM}}$ i7-12700K CPU @ 3.60 GHz(20 CPUs), and 32 GB DDR5L of RAM(24 GB remaining). All Algorithms are computed by MATLAB 17a.

To effectively demonstrate the performance improvement, we use the computational results from CPLEX (version 12.8) as the baseline. Moreover, CPLEX (version 12.8) employs YALMIP \cite{yalmip} as an interface to call functions within MATLAB 17a.

\subsection{Benchmark instances and parameter setting}

In addition to the theoretical comparison, we also perform experimental comparison to verify the performance of the EDHR on large-scale problems.

In this section we conduct numerical experiments to compare the performance of EDHR with CPLEX based on five kinds of randomly generated instances, namely the Uncorrelated instances (UC), Weakly correlated instances (WC), Strongly correlated instances (SC) and Inverse strongly correlated instances (IC), Almost strongly correlated instances (ASC), respectively. Each kind has 10 data instances with different scale $n = 200, 400, 600, \dots,2000$ and same range $R = 1000$. For each data instance, the profit coefficients $p_j$ and weight coefficients $w_j$ are generated as follows \cite{Martello1999, Pisinger2005}:

1. UC instance: $p_j\in _z[1,R]$, $w_j\in _z[1,R]$, where $x\in _z[A,B]$ denotes $x$ is a random integer within interval $[A,B]$.

2. WC instance: $w_j\in _z[R/5+1,R]$, $p_j\in _z[w_j-R/5,w_j+R/5]$.

3. SC instance: $w_j\in _z[1,R]$, $p_j = w_j+R/5$.

4. IC instance: $p_j\in _z[1,R]$, $w_j = p_j+R/5$.

5. ASC instance: $w_j\in _z[1,R]$, $p_j\in _z[w_j+R/10-R/50, w_j+R/10+R/50]$.

For a better comparison, we set the break item by $b=\lfloor n/2\rfloor$, the weight of the break item $b$ by $w_b=\lfloor R/5\rfloor$, and the knapsack capacity by $C=\sum\limits_{j=1}^bw_j-1$. Thus, the residual capacity is $\lfloor R/5\rfloor-1$. Specifically, due to the unique characteristics of IC instances, if we set $w_b = R/5$, then $p_j = 0$, which renders the instances meaningless. Consequently, we let $w_b = 2 \times R/5$ in the IC data, resulting in $p_b = R/5$.

To effectively demonstrate the algorithm's performance improvement, we employ the results from CPLEX as our baseline. CPLEX is widely utilized by industrial researchers and is regarded as a standard baseline in operations research. It should be noted that MATLAB supports CPLEX up to version 12.10 and is no longer available for download from the official website, precluding access to higher versions for computation results. Due to these software limitations, we have chosen to use CPLEX version 12.8 for comparison. This version's computation results are frequently used for baseline, as seen in the literature \cite{Wei2019}. Consequently, the results from CPLEX (version 12.8) serve as an appropriate baseline for our comparison.

For the EDHR, the computational effect varies significantly with different values of $i$. Given that CPLEX is renowned for its rapid solving speed, if adding a large number of constraints yields only a marginal improvement in the performance, the solution speed may actually decrease. After careful consideration, we have decided to set the parameter $i$ to 2 within the EDHR.

Meanwhile, due to CPLEX's rapid computation speed, factors as computer response time can significantly affect the overall solving time more than the algorithm itself. A straightforward comparison of computation times is therefore susceptible to considerable error, even when computation times are averaged over multiple runs. For a given instance, CPLEX records the size of the search tree nodes, known as 'ticks,' throughout the solution process, which remains constant. Typically, a lower tick count correlates with a shorter solution time. Consequently, to better evaluate the algorithm's solving speed, we are employing the value of ticks as a metric.

\subsection{Computational results and comparisons}

\begin{table}[h]
\caption{Solving performances of CPLEX and EDHR on UC}
\centering
\scriptsize
\begin{tabular}{cccccccc}
\toprule
\multirow{2}{*}{instance} & \multirow{2}{*}{n} & \multicolumn{2}{c}{CPLEX} &  & \multicolumn{2}{c}{EDHR} & \multirow{2}{*}{rate} \\ \cmidrule{3-4} \cmidrule{6-7}
                          &                    & result       & ticks      &  & result      & ticks      &                       \\ \hline
UC01 &  200 &  62673 &  3.02 &  &  62673 &  1.08 & 64.24\% \\
UC02 &  400 & 135611 &  5.04 &  & 135611 &  2.73 & 45.83\% \\
UC03 &  600 & 205941 &  6.01 &  & 205941 &  2.60 & 56.74\% \\
UC04 &  800 & 267277 & 10.76 &  & 267277 &  3.61 & 66.45\% \\
UC05 & 1000 & 338991 &  9.73 &  & 338991 &  5.84 & 39.98\% \\
UC06 & 1200 & 398597 & 14.71 &  & 398597 &  7.06 & 52.01\% \\
UC07 & 1400 & 471846 & 15.84 &  & 471846 &  7.69 & 51.45\% \\
UC08 & 1600 & 532806 & 18.49 &  & 532806 &  7.74 & 58.14\% \\
UC09 & 1800 & 599534 & 21.65 &  & 599534 &  9.86 & 54.46\% \\
UC10 & 2000 & 672594 & 23.05 &  & 672594 & 11.64 & 49.50\% \\ \hline
average &   &        &       &  &        &       & 53.88\% \\ \botrule
\end{tabular}
\end{table}

\begin{table}[h]
\caption{Solving performances of CPLEX and EDHR on WC}
\centering
\scriptsize
\begin{tabular}{cccccccc}
\toprule
\multirow{2}{*}{instance} & \multirow{2}{*}{n} & \multicolumn{2}{c}{CPLEX} &  & \multicolumn{2}{c}{EDHR} & \multirow{2}{*}{rate} \\ \cmidrule{3-4} \cmidrule{6-7}
                          &                    & result       & ticks      &  & result      & ticks      &                       \\ \hline
WC01    &  200 &  70447 &  5.94 &  &  70447 &  3.82 & 35.69\% \\
WC02    &  400 & 140429 & 12.29 &  & 140429 & 11.10 &  9.68\% \\
WC03    &  600 & 209320 & 18.45 &  & 209320 & 10.73 & 41.84\% \\
WC04    &  800 & 275931 & 15.50 &  & 275931 & 14.60 &  5.81\% \\
WC05    & 1000 & 346576 & 13.88 &  & 346576 & 10.81 & 22.12\% \\
WC06    & 1200 & 416613 & 21.07 &  & 416613 & 12.63 & 40.06\% \\
WC07    & 1400 & 493611 & 24.10 &  & 493611 & 13.88 & 42.41\% \\
WC08    & 1600 & 548660 & 28.58 &  & 548660 & 16.69 & 41.60\% \\
WC09    & 1800 & 635572 & 29.31 &  & 635572 & 28.49 &  2.80\% \\
WC10    & 2000 & 703742 & 32.77 &  & 703742 & 15.61 & 52.36\% \\ \hline
average &      &        &       &  &        &       & 29.44\% \\ \botrule
\end{tabular}
\end{table}

\begin{table}[h]
\caption{Solving performances of CPLEX and EDHR on SC}
\centering
\scriptsize
\begin{tabular}{cccccccc}
\toprule
\multirow{2}{*}{instance} & \multirow{2}{*}{n} & \multicolumn{2}{c}{CPLEX} &  & \multicolumn{2}{c}{EDHR} & \multirow{2}{*}{rate} \\ \cmidrule{3-4} \cmidrule{6-7}
                          &                    & result       & ticks      &  & result      & ticks      &                       \\ \hline
SC01    &  200 &  33723 & 1.56 &  &  33723 & 0.77 & 50.64\% \\
SC02    &  400 &  69201 & 3.22 &  &  69201 & 1.35 & 58.07\% \\
SC03    &  600 & 103186 & 4.76 &  & 103186 & 2.06 & 56.72\% \\
SC04    &  800 & 137153 & 5.36 &  & 137153 & 2.77 & 48.32\% \\
SC05    & 1000 & 173918 & 5.90 &  & 173918 & 2.75 & 53.39\% \\
SC06    & 1200 & 207026 & 6.34 &  & 207026 & 2.76 & 56.47\% \\
SC07    & 1400 & 239483 & 6.80 &  & 239483 & 3.45 & 49.26\% \\
SC08    & 1600 & 277211 & 7.21 &  & 277211 & 3.81 & 47.16\% \\
SC09    & 1800 & 312242 & 7.44 &  & 312242 & 4.10 & 44.89\% \\
SC10    & 2000 & 343964 & 7.42 &  & 343964 & 3.90 & 47.44\% \\ \hline
average &      &        &      &  &        &      & 51.24\% \\ \botrule
\end{tabular}
\end{table}

\begin{table}[h]
\caption{Solving performances of CPLEX and EDHR on IC}
\centering
\scriptsize
\begin{tabular}{cccccccc}
\toprule
\multirow{2}{*}{instance} & \multirow{2}{*}{n} & \multicolumn{2}{c}{CPLEX} &  & \multicolumn{2}{c}{EDHR} & \multirow{2}{*}{rate} \\ \cmidrule{3-4} \cmidrule{6-7}
                          &                    & result       & ticks      &  & result      & ticks      &                       \\ \hline
IC01    &  200 &  78291 & 1.54 &  &  78291 & 1.54 & 0.00\% \\
IC02    &  400 & 145196 & 3.35 &  & 145196 & 3.35 & 0.00\% \\
IC03    &  600 & 223254 & 3.73 &  & 223254 & 3.73 & 0.00\% \\
IC04    &  800 & 297910 & 4.10 &  & 297910 & 4.10 & 0.00\% \\
IC05    & 1000 & 378112 & 5.37 &  & 378112 & 5.37 & 0.00\% \\
IC06    & 1200 & 459241 & 4.94 &  & 459241 & 4.94 & 0.00\% \\
IC07    & 1400 & 525240 & 6.11 &  & 525240 & 6.11 & 0.00\% \\
IC08    & 1600 & 595483 & 5.49 &  & 595483 & 5.49 & 0.00\% \\
IC09    & 1800 & 675851 & 6.95 &  & 675851 & 6.95 & 0.00\% \\
IC10    & 2000 & 745388 & 7.11 &  & 745388 & 7.11 & 0.00\% \\ \hline
average &      &        &      &  &        &      & 0.00\% \\ \botrule
\end{tabular}
\end{table}

\begin{table}[h]
\caption{Solving performances of CPLEX and EDHR on ASC}
\centering
\scriptsize
\begin{tabular}{cccccccc}
\toprule
\multirow{2}{*}{instance} & \multirow{2}{*}{n} & \multicolumn{2}{c}{CPLEX} &  & \multicolumn{2}{c}{EDHR} & \multirow{2}{*}{rate} \\ \cmidrule{3-4} \cmidrule{6-7}
                          &                    & result       & ticks      &  & result      & ticks      &                       \\ \hline
ASC01   &  200 &  24107 &  3.26 &  &  24107 &  2.35  & 27.91\% \\
ASC02   &  400 &  49283 &  6.53 &  &  49283 &  3.70  & 43.34\% \\
ASC03   &  600 &  71575 &  6.92 &  &  71575 &  3.86  & 44.22\% \\
ASC04   &  800 & 109271 & 17.30 &  & 109271 & 13.49  & 22.02\% \\
ASC05   & 1000 & 125004 & 14.23 &  & 125004 &  7.66  & 46.17\% \\
ASC06   & 1200 & 139315 & 19.38 &  & 139315 &  9.31  & 51.96\% \\
ASC07   & 1400 & 159856 & 19.71 &  & 159856 & 11.28  & 42.77\% \\
ASC08   & 1600 & 189148 & 31.76 &  & 189148 & 26.60  & 16.25\% \\
ASC09   & 1800 & 216785 & 26.28 &  & 216785 & 13.67  & 47.98\% \\
ASC10   & 2000 & 252810 & 30.24 &  & 252810 & 18.78  & 37.90\% \\ \hline
average &      &        &       &  &        &        & 38.05\% \\ \botrule
\end{tabular}
\end{table}

\begin{table}[h]
\caption{Solving performances of CPLEX and EDHR on literature \cite{Jooken2021}}
\centering
\scriptsize
\begin{tabular}{lccccccc}
\toprule
\multirow{2}{*}{instances} & \multirow{2}{*}{n} & \multirow{2}{*}{$|N_{2,1}\cup N_{2,4}|$} & \multicolumn{2}{c}{CPLEX} &  & \multicolumn{2}{c}{EDHR} \\ \cmidrule{4-5} \cmidrule{7-8}
                           &                    &                    & result       & ticks      &  & result      & ticks      \\ \hline
n\_400\_c\_1000000\_g\_2\_f\_0.1\_eps\_0.0001\_s\_200   &  400 &  0 &  505215 &  6.87 &  &  505215 &  6.87 \\
n\_400\_c\_1000000\_g\_2\_f\_0.1\_eps\_0.001\_s\_200    &  400 &  0 &  504648 &  6.85 &  &  504648 &  6.85 \\
n\_400\_c\_1000000\_g\_2\_f\_0.1\_eps\_0.01\_s\_200     &  400 &  0 &  513600 &  7.01 &  &  513600 &  7.01 \\
n\_400\_c\_1000000\_g\_2\_f\_0.1\_eps\_0.1\_s\_200      &  400 &  0 &  604798 &  6.98 &  &  604798 &  6.98 \\
n\_400\_c\_1000000\_g\_2\_f\_0.1\_eps\_0\_s\_200        &  400 &  0 &  503787 &  7.07 &  &  503787 &  7.08 \\ \botrule
\end{tabular}
\end{table}

All computed results for the five types of instances are summarized in Tables 1-5. In these tables, columns 1 and 2 list the names and item scales of the tested instances. Columns 3 and 4 present the results and ticks computed from CPLEX, respectively. Columns 5 and 6 show the results and ticks computed from EDHR, respectively. The final column displays the improvement rate of EDHR relative to CPLEX in terms of ticks.

Table 1-5 shows that EDHR significantly improves problem-solving speed compared to CPLEX, particularly in UC and SC instances, where ticks decreased by an average of 53.88\% and 51.24\%, respectively. In WC instances, EDHR's performance is comparatively weak across individual instances such as WC02, WC04, and WC09, with improvement effects all below 10\%. Nevertheless, the overall effect is considerable, showing an average reduction of 29.44\% in ticks. For ASC instances, EDHR exhibited a relatively consistent improvement, with no instances showing less than a 10\% effect, and an overall average reduction in ticks of 38.05\%.

It is particularly noteworthy that in IC instances, since the sets ($N_{1,1}, N_{2,1}, N_{1,4}$, and $N_{2,4}$) in these instances are empty, EDHR cannot improve the performance for these cases. To further validate this characteristic of EDHR, where it fails to enhance computing speed when the aforementioned four sets are empty, we examined five additional instances from the literature \cite{Jooken2021}. The specific computational results are presented in Table 6.

In Table 6, the displayed content aligns with that of Tables 1-5, with the exception of column 3, which presents the number of items in $N_{2,1}$ and $N_{2,4}$. The table clearly shows that due to $|N_{2,1} \cup N_{2,4}| = 0$, EDHR is unable to decrease the ticks for those instances. Nevertheless, the table also reveals that EDHR does not appreciably hinder the solving speed, thus enabling the algorithm to operate effectively.

\section{Conclusions}

In this paper, we improve Dembo and Hammer's reduction algorithm for the 0-1 Knapsack Problem (0-1 KP) and introduced an extension of their algorithm, referred to as EDHR (Extension Dembo and Hammer's Reduction). Computational results for various instances demonstrate that EDHR significantly improves the solving speed on the majority of 0-1 KP instances, with a pronounced effect.

Our future research will be addressed as the following two issues. First, whether EDHR can combined with other reduction strategy to accelerate the solving speed. Second, we can apply the definition of the set $N_{i,1}$ in EDHR to reduce the search space. By reducing the search space of feasible solution regions that do not contain optimal solutions, the meta-heuristic algorithm not only accelerates the solving of NP-hard problems but also helps to restrict or estimate the parameters of certain operators.

\backmatter

\end{document}